\newtheorem{theorem}{Theorem}
\newtheorem{lemma}{Lemma}
\newtheorem{definition}{Definition}
\newtheorem{proposition}{Proposition}
\title{System Stability Under Adversarial Injection of Dependent Tasks} 
\author{
  Vicent  Cholvi \\
  Departament de Llenguatges i Sistemes Inform\`{a}tics\\
  Universitat Jaume I\\
   Castell\'{o}, Spain \\
   \and
 Juan Echag\"{u}e \\
  Departament de Llenguatges i Sistemes Inform\`{a}tics\\
  Universitat Jaume I\\
   Castell\'{o}, Spain \\
   \and
   Antonio Fern\'andez Anta\\
  IMDEA Networks Institute\\
  Madrid, Spain\\
   \and
   Christopher Thraves Caro\\
   Depto. Ing. Mat. \\
   Facultad  de Ciencias F\'{\i}sicas y Matem\'aticas\\
   Universidad de Concepci\'on, Chile
}
\newcommand{\cA}{{\mathcal A}}
\newcommand{\blue}[1]{{\color{blue}{#1}}}
\newcommand{\red}[1]{{\color{red}{#1}}}
\newcommand{\cyan}[1]{{\color{cyan}{#1}}}
\renewcommand{\blue}[1]{{{#1}}}
\renewcommand{\red}[1]{{{#1}}}
\renewcommand{\cyan}[1]{{{#1}}}
\begin{document}

\maketitle

\begin{abstract}
In this work, we consider a computational model of a distributed system formed by a set of
servers in which jobs, that are continuously arriving, have to be executed. Every job is formed by a set of
dependent tasks (i.~e., each task may have to wait for others to be completed before it can be started), each of which has to be executed in one of the servers. The arrival of jobs and their properties is assumed to be controlled by a bounded adversary, whose only restriction is that it cannot overload any server. This model is a non-trivial generalization of the Adversarial Queuing Theory model of Borodin et al., and, like that model, focuses on the stability of the system: whether the number of jobs pending to be completed is bounded at all times. We show multiple results of stability and instability for this adversarial model under different combinations of the scheduling policy used at the servers, the arrival rate, and the dependence between tasks in the jobs.
\end{abstract}

\keywords{Tasks scheduling \and  task queuing \and adversarial queuing \and dependent tasks \and stability}

\section{Introduction}

In this work, we consider a model of jobs formed by dependent tasks 
that have to be executed in a set of servers. 
\cyan{The dependencies among the tasks of a job restrict} the order and time of their execution. 
For instance, a task $q$ may need some information from 
another task $p$, so that the latter must complete before $q$ can be
executed.
%
This model embodies, for instance, the dynamics of Network Function 
Virtualization (NFV) systems \cite{herrera2016resource,yi2018comprehensive} or Osmotic Computing (OC) \cite{villari2016osmotic}. In a NFV system,
network services (which are job types) are specified as 
service chains, obtained by the concatenation of network functions. 
These network functions are dependent computational tasks to be executed in the NFV
Infrastructure (e.g., servers distributed over the network). In an OC system, an application is divided into microservices that are distributed and deployed on an edge/cloud server infrastructure. The user requests (jobs) involve processing
\cyan{(tasks) in several of these microservices,} as defined by an orchestrator that takes into account the dependencies between the
microservices. In that line, it also  encompasses a number of features of Orchestration Languages (see, for instance,~\cite{Kitchin2009}), which propose a way to relate concurrent tasks to each other in a controlled fashion: the invocation of tasks to achieve a goal, the synchronization between tasks, managing  priorities, etc.

In our model, we consider a dynamic system in which job requests (or jobs for short) are
continuously arriving. Each job contains the whole specification of 
its dependent tasks: the collection of tasks to be executed, the server that 
must execute each task, the time the execution incurs, 
\cyan{the dependencies among tasks,}
etc.
Instead of assuming stochastic job 
\cyan{arrivals into the system, 
in our model we assume the 
existence of an adversary 
that has full control of the job} 
requests arrivals, and the specification of their tasks. The 
only restriction on the adversary is that no server can be overloaded in 
the long run (some burstiness in the load is allowed).

In this adversarial framework, the objective is to achieve stability in 
the system. This means that the system is able to
\cyan{cope with the adversarial arrivals, maintaining the number of pending job requests in the 
system bounded at all times. (This usually also implies that all the job 
requests are eventually completed.)}
Observe that the framework assumes that the resource allocation 
is done by the adversary (since it chooses where tasks have to be 
executed and in which order). Hence, the only tool we have to achieve 
stability is the scheduling of tasks in the different servers.

The study of the quality of service that can be provided under worst-case 
assumptions in a given system (NFV or OC, for instance) is important in order to 
be able to honor Service Level Agreements (SLA). The positive 
results we obtain in this paper show that it is possible to guarantee a 
certain level of service even under pessimistic assumptions. These 
results can also be used to separate resource allocation and scheduling 
as long as the resource allocation guarantees that servers are not 
overloaded, since we prove that it is possible to guarantee stability in this case.

%
\subsection{Related Work}
For many years, the common belief was that only overloaded queues\footnote{A \cyan{server queue is considered to be overloaded when the total arrival rate at the} server is greater than the service rate.} could generate instability, while underloaded ones could only induce delays that are longer than desired, but always remain stable. This general wisdom goes back to the models of  networks originally developed by Kleinrock~\cite{kleinrock-75-a}, and based on Jackson queuing networks~\cite{Jac:Job}. Stability results for more general classes of queuing networks~\cite{321887,kelly1979} also confirmed that only overload generates instability. This belief was shown to be wrong when it was observed that, in some networks, the backlogs in specific queues could grow indefinitely even when such queues were not overloaded~\cite{lukumar91,rybko92}. 

Motivated by this fact, there has been an effort to understand the factors that can affect the stability of a queueing network. In~\cite{Chang94stability}, the authors provide some results regarding some conditions that render bounded queue lengths both for a single queue and for feedforward networks. In~\cite{10.2307/2245163}, it is shown a class of networks which, although queues are served substantially more quickly that the rate at which tasks are injected, their mean service times are as small as desired. By using a simple queueing network, in~\cite{10.2307/4140498} it was shown that conditions on the mean interarrival and service times are not enough to determine its stability under a particular policy.

It was later shown that instability could also arise in some types of Kelly networks~\cite{AndrewsAFLLK01,BorodinKRSW01} (a network is said to be of the \emph{Kelly type}~\cite{kelly1979} when servers have the same service rates). These results in the \emph{Adversarial Queuing Theory} (AQT) model aroused an interest in understanding the stability properties of packet-switched  networks. This has attracted the attention of many researchers in recent years (see, for instance, the results in~\cite{blesa2006stability,1285116,caro2008performance,tsaparas1999stability}).

\subsection{Our Work}
In this paper, we introduce a model
to analyze queuing systems of computational jobs formed by dependent tasks.
We call this model \emph{Adversarial Job Queueing} (AJQ). The main 
contribution of the AJQ model is a novel approach for modeling 
\emph{tasks} and their dependencies in the computational \emph{job} 
system which is much richer than the modelling capabilities of AQT. 
As mentioned, a job is composed by
a set of tasks, each described by some parameters, like the server in 
which the task must be executed or the time that the task needs to 
be completed.
Additionally, each task depends on other tasks of the same job 
(i.~e., subsets of tasks that must be completed before the given task starts). 

\cyan{
The rich variety of task dependencies that we allow is, as far as we know, unique of our formalism, and makes AJQ very adaptive to model a variety of complex scenarios (including AQT as a special case).
For instance, our model allows imposing that a task $q$ cannot start until
a set $P$ of other tasks of the same job are completed. This expresses a
scenario in which task $q$ aggregates the results obtained by the tasks in
$P$. One example of this configuration is a MapReduce computation \cite{dean2008mapreduce}, in which the reduce task has to wait for all the map tasks to complete. This dependence in which the task $q$ needs \emph{all} the tasks in set $P$ to complete is called an \emph{AND dependence.} However, our formalism allows for more expressiveness by means of the \emph{OR dependence}, in which several AND dependencies are combined. In this case, a task $q$
has several sets $P_1, P_2, \ldots, P_l$, and it waits for \emph{any} set $P_i$ to be completed. This configuration appears, for instance, when several redundant tasks are used, so that the output of any of them is equally valid as input for $q$ \cite{gomes2001algorithm}.
}

As mentioned, tasks are processed in servers. \cyan{When tasks are 
active (ready to be executed) at a server 
but not being processed yet, they are maintained in a queue}
at the server. It is assumed that each server has an infinite buffer to 
store its own queue of active tasks. We use a bounded adversarial 
setting in our model. In this setting, 
we assume that an adversary injects jobs in the system, choosing the 
time and the characteristics of each injected job, with certain limits. 
This leads to worst-case system analyses. 
A desirable property under this model is that each server's service rate 
matches its injection rate for an arbitrarily long period of time, which 
implies the \emph{stability} of the system, 
since the number of jobs at any time is bounded. 

In the rest of this paper, we define the model more formally and provide 
some results regarding both the stability and instability under different
assumptions. From the point of view of the dependencies between tasks,
we show that if they are \emph{feed-forward} (see below) then the system 
is stable. From the point of view of the scheduling policies (i.~e., how 
a server decides which task to execute next), we observe that, since AJQ 
is more general than AQT (once we do the appropriate matching between jobs and packets, 
and between links and servers) unstable scheduling policies in AQT are 
easily translated into policies that are unstable in AJQ. 
On the other hand, we show that some stable 
scheduling policies in AQT remain stable in AJQ. For instance,
we prove that \textit{LIS}, which
gives priority to older \cyan{tasks/packets} (and is stable in AQT for any rate below 1), 
is stable in AJQ if the injection rate of jobs is below
a certain value that depends on the tasks processing time and activation delay.
Finally, we show that there are other policies
that are stable in AQT but unstable in AJQ.



\section{Model}\label{sec:model}
In this section, we define the \emph{Adversarial Job Queueing} (AJQ) 
model. The AJQ model is designed to analyze systems of queueing jobs. 
The three main components of an AJQ system $(S,P,\cA)$ are:
\begin{itemize} 
\item a set $S=\{s_1,s_2,\ldots,s_{n}\}$ of $n$ servers,
\item an \emph{adversary} $\cA$ who injects jobs in the system, and
\item a scheduling \emph{policy} $P$, which is the criteria used by 
servers to decide which task to serve next among the 
tasks waiting in their queues.
\end{itemize} 
The system evolves over time continuously (unlike AQT, which assumes discrete time). In each moment, 
the adversary may inject jobs to the system while the servers process
those jobs. In each moment as well, 
some tasks 
may be waiting to be executed, others may be in process, and others
may be completed. A job is considered \cyan{\emph{completed}} when all its tasks 
are completed. When a job is \cyan{completed,} all its tasks disappear from 
the system. 

Each job $\langle K, f^K \rangle$ consists of a finite set $K$ of tasks 
and a function $f^K$ that determines dependencies among the tasks.
(For simplicity we will denote the job $\langle K, f^K \rangle$ by its task set $K$.)
Let $K = \{k_1,k_2,k_3, \ldots,k_{l_K}\}$ be a job, 
where each 
$k_i$ is a task of $K$. The integer
$l_K$ denotes the number of tasks 
of $K$.  Each \cyan{task} $k_i$ is 
\cyan{defined} by 
three parameters $\langle s_i^K,d_i^K,t_i^K \rangle$. The parameter
$s_i^K \in S$ 
\cyan{is}
the server in which $k_i$ must be executed. 
The parameter \cyan{$d_i^K \geq 0$} is the \emph{activation delay} of $k_i$.  
The parameter 
\cyan{$t_i^K > 0$} is the \emph{processing time} of $k_i$, 
i.~e., the time server $s_i^K$
takes to execute task $k_i$. 

\red{Let $(S,P,\cA)$ be an AJQ system.
Let $T_{max}:=\max_{i,K} \{t_i^K\}$ and $T_{min}:= \min_{i,K} \{t_i^K\}$
be the maximum and minimum time, respectively, 
required to complete a task of any job $K$ injected in the system. 
We assume that these two quantities are \cyan{bounded} and do not depend on 
the time. 
Let $D_{min}:= \min_{i,K} \{d_i^K\}$ and 
$D_{max}:= \max_{i,K} \{d_i^K\}$ be the minimum and maximum 
activation 
delay, respectively, 
among all tasks of any job injected in the system. 
Since, $d_i^K \geq 0$ it follows that 
$D_{min}\geq 0$. On the other hand, we assume that $D_{max}$ is 
a constant that may depend on the parameters of the system, but it does 
not change over time. 
Finally, we use $L=\max_K \{l_K\}$ to denote the maximum number of tasks (length) of a 
job, which we assume is also a constant that does not depend 
on the time.} 

{\bf Feasibility.} Let $\mathcal{P}(K)$ be the \emph{power set} of $K$, i.~e., the 
set of all subsets of $K$. Furthermore, let  $\mathcal{P}^2(K)$ 
be the \emph{second power set} of $K$, i.~e., 
the set of all subsets of $\mathcal{P}(K)$. 
Given a job $K$, a \emph{feasibility function} 
$f^K:K\rightarrow \mathcal{P}^2(K)$  
determines which tasks of $K$ are 
\cyan{\emph{feasible}, which means that they are ready to be executed, once the activation delay has passed.}
Let $f^K(k_i)$ be equal to $\{A_1,A_2,\ldots, A_{\ell_i}\}$.
The sets $A_x$ for $1\leq x \leq \ell_i$ are called \emph{feasibility sets} for $k_i$.
Then, the 
task $k_i$ is \cyan{\emph{feasible} at a time $t$ if there exists 
a feasibility set $A_x$ for $k_i$
such that all tasks in $A_x$
have been completed by time $t$.}  
Otherwise, $k_i$ is \emph{blocked,} 
\cyan{and still has to wait for some other 
tasks of $K$ 
to complete before becoming feasible.}

The activation delay $d_i^k$ of a task $k_i$ represents a setup cost, 
expressed in time, that $k_i$ must incur once it becomes feasible
and before it \cyan{can start} to be processed. If $t$ is the \cyan{time
instant at 
which $k_i$ becomes} feasible, then $k_i$ will incur its activation delay 
during time 
interval $[t,t+d_i^k]$. Hence, it cannot be executed during such 
interval,
\cyan{in which} we say that task $k_i$ is a \emph{delayed}
feasible task (or only delayed task). When $k_i$ completes its 
activation delay \cyan{at time $t+d_i^k$, it can be served,
and since that moment} will be referred to as an 
\emph{active} feasible task, or simply active task. 
Equivalently, a feasible task is 
active if it has been feasible for at least $d_i^k$ time.
A job with at least one feasible \cyan{(resp., active)} task will be referred to as a 
\emph{feasible} \cyan{(resp., \emph{active})} job. 

\red{With the feasibility function, 
a task cannot start being served until some given
state of the tasks in the same job holds. Hence, the feasibility 
function can be used, for instance, to force the execution sequence of the 
tasks of a job. It enhances the modeling capabilities of the AJQ model by allowing the coexistence of AND
dependencies and OR dependencies,}
\cyan{as mentioned.} \\

{\bf Doability.} Let $K$ be a job and $k_i$ be a task of $K$. We say that $k_i$
is an \emph{initial} task of $K$ if $\emptyset \in f^K(k_i)$.
\cyan{Observe that all initial tasks $k_i$ are automatically feasible at the time the job $K$ is injected, and they become active $d^K_i$ time later.}

\cyan{We assign a \emph{layer} $\lambda(K,i)$ to the tasks $k_i$ of a job $K$ 
as follows.} 
All initial tasks have layer $\lambda(K,i)=1$. 
For any $j>1$, a task $k_i$ is assigned layer $\lambda(K,i)=j$ if
it is not feasible when all tasks of layers $1,...,j-2$ 
are completed, but it becomes feasible when additionally the tasks of
layer $j-1$ are completed. Let $\lambda_K\leq l_K$ \cyan{denote} the number of layers of job $K$.
If a task $k_i$ has layer \cyan{$\lambda(K,i)=\ell$, then there is a feasibility set 
$A_x \in f^K(k_i)$ for $k_i$ such that $A_x \subseteq \{k_j \in K: \lambda(K,j) < \ell \}$.}

\cyan{Observe that the above definition does not guarantee that all tasks of a job will be assigned a layer. In fact, it is not hard to create jobs that have tasks dependencies (e.g., cyclic dependencies) that prevent some tasks from being assigned a layer. Table~\ref{table:notdoable} shows an example of a job 
whose tasks get layer numbers and an example with tasks that cannot be assigned a layer number.
}

\begin{table}[t]
\begin{minipage}{0.5\linewidth}
     \centering
  \caption{This table shows the feasibility function of two 
   different jobs $J$ and $K$ defined over the same set of tasks 
   $\{1,2,3,4,5\}$. 
   Column $f^J(\cdot)$ shows the feasibility function of job $J$, while column $f^K(\cdot)$
   shows the feasibility function of job $K$.
   \blue{Job $J$ is not doable, since tasks $2, 3, 4$ and $5$
   \cyan{cannot be assigned a layer. On the other hand,} job $K$ is doable. Indeed, the number of each task in job $K$
   corresponds to its layer.}
   }
     \label{table:notdoable}
    \begin{tabular}{|c|c|c|}
\hline
Tasks & $f^J(\cdot)$    & $f^K(\cdot)$      \\ \hline
$1$   & $\{\emptyset\}$ & $\{\emptyset\}$   \\ \hline
$2$   & $\{\{1,5\}\}$   & $\{\{1\},\{5\}\}$ \\ \hline
$3$   & $\{\{2\}\}$     & $\{\{2\}\}$       \\ \hline
$4$   & $\{\{3\}\}$     & $\{\{3\}\}$       \\ \hline
$5$   & $\{\{4\}\}$     & $\{\{4\}\}$       \\ \hline
\end{tabular}
\end{minipage}\hfill
\begin{minipage}{0.5\linewidth}
\centering
    \includegraphics[width=0.6\textwidth]{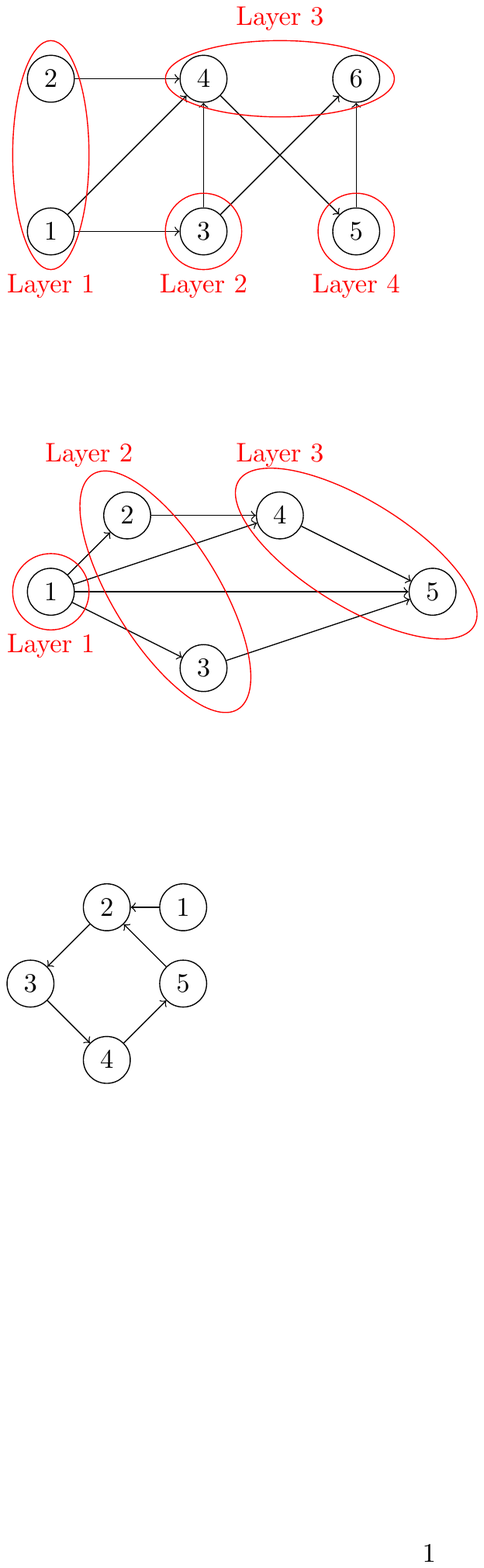}
    \caption{This figure shows the skeleton of jobs $J$ and $K$ presented in Table \ref{table:notdoable}, which are the same.}
    \label{fig:skeleton}
    \end{minipage}
\end{table}


We want every job to be potentially \cyan{completed}. Therefore, we impose some
restrictions over every feasibility function. 
\begin{definition}\label{def:doabletasks}
Let $K$ be a job and $f^K:K\rightarrow \mathcal{P}^2(K)$  be its
feasibility function. We say that $K$ is \emph{doable} if
every task $k_i$ of $K$ can be assigned a layer. 
\end{definition}
It is worth mentioning that, deciding whether a job is doable or not \cyan{as defined} can be computed in polynomial time
with respect to the size of the job (that takes into account the number of tasks and the size of the 
feasibility function).  \cyan{Indeed, layer 1 can be computed by checking which tasks have the empty set 
as a feasibility set. Then, a simple recursive algorithm computes all tasks in layer $i$
using all the tasks in layers $1,2,\ldots,i-1$.}

We show in the next proposition, that the \cyan{condition of doable job is necessary for a job to be completed, and that it is also sufficient if it is the only job injected in a system and the scheduling policy is work conserving}.
\begin{proposition}
Let $(S,P,\cA)$ be a system where the adversary $\cA$ injects only one job $K$
\cyan{and $P$ is work conserving. Then,}
$K$ can be completed if and only if $K$ is doable. 
\end{proposition}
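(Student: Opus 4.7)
The plan is to prove both implications separately. The forward direction asserts that completability of $K$ implies doability; I would show this by an induction based on the order in which tasks are completed in the successful execution, and use that induction to construct a layer assignment for each task. The backward direction asserts that a doable job is completable under a work-conserving policy when no other jobs are present; I would prove this by induction on the layer index, showing each layer completes in finite time.

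For the forward direction, suppose $K$ is completed and order its tasks by their completion times as $k_{\pi(1)}, k_{\pi(2)}, \ldots, k_{\pi(l_K)}$. I claim by induction on $j$ that $k_{\pi(j)}$ can be assigned a layer. For $j=1$, no task has been completed before $k_{\pi(1)}$, so the feasibility set that made it feasible must be $\emptyset$; hence $k_{\pi(1)}$ is an initial task with $\lambda(K,\pi(1))=1$. For $j>1$, the feasibility set $A_x \in f^K(k_{\pi(j)})$ that rendered $k_{\pi(j)}$ feasible is a subset of $\{k_{\pi(1)},\ldots,k_{\pi(j-1)}\}$, since those are the only completed tasks at that time. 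By the inductive hypothesis, each task in $A_x$ has a layer; letting $\ell$ be the maximum such layer, the task $k_{\pi(j)}$ becomes feasible once all tasks of layers $1,\ldots,\ell$ have been completed, and therefore it receives layer at most $\ell+1$.

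For the backward direction, suppose $K$ is doable, so every task has a layer, and let $\lambda_K$ be the maximum layer. I would prove by induction on $\ell \in \{1,\ldots,\lambda_K\}$ that all tasks of layer at most $\ell$ complete in finite time. The base case handles initial tasks: they are feasible at the injection time of $K$, become active within $D_{max}$, and since no other job is ever injected and the policy is work conserving, each server processes its active tasks one at a time, each taking at most $T_{max}$. Since at most $l_K$ tasks ever exist, every active task is served in finite time. For the inductive step, assume all tasks of layer at most $\ell-1$ are completed by some finite time $t_0$. By the definition of layers, every task of layer $\ell$ is feasible by $t_0$, becomes active by $t_0 + D_{max}$, and is served in finite time by the same work-conserving argument. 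Iterating up to $\lambda_K$ completes the job.

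The main subtlety will be in the backward step, because the scheduling policy may interleave tasks of layer $\ell$ with tasks of higher layers that become feasible during the processing of layer $\ell$. The key observation to handle this is that the total number of tasks is bounded by $l_K$, so work-conservation guarantees that every active task at every server is processed in at most $l_K \cdot T_{max}$ time, regardless of how the policy reorders them. This decouples the progress argument from the specific scheduling choices and lets the layer-by-layer induction go through cleanly.
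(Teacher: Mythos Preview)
Your proof is correct and rests on the same underlying idea as the paper's: both directions exploit the layer structure and the fact that a task's witnessing feasibility set consists of tasks with strictly smaller layer. The paper presents both directions as minimal-counterexample contradiction arguments (for doable $\Rightarrow$ completable it takes an uncompleted task of smallest layer to show some feasible task always exists; for the converse it takes the first layerless task to become feasible and argues it would have received a layer), whereas you give direct inductions (on completion order in one direction, on layer index in the other). Your treatment is also a bit more explicit about the finite-time bound via $l_K \cdot T_{max}$, which the paper leaves as a one-line appeal to work-conservation and boundedness.
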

\begin{proof}
On one hand, if $K$ is doable, $K$ \cyan{can be completed, since until that happens, there will always be at least one feasible 
task not completed.} To see this, assume by contradiction that there is a moment before $K$ is completed such that no task is feasible. Consider \cyan{any task $k_i$ among those that have not} been \cyan{completed} yet 
with the smallest layer (since $K$ is doable, all tasks have a layer). 
Therefore, $k_i$ has a feasibility set that is a subset of the \cyan{completed tasks}. Hence, $k_i$ is feasible, which is a contradiction. 
\cyan{Then, since there are always feasible tasks, their activation time is bounded, there are no other jobs in the system, and $P$ is work conserving, eventually all tasks of $K$ will become active, be scheduled and processed, and complete.}

On the other hand,
\cyan{assume that job $K$ is not doable but completes all its tasks in system $(S,P,\cA)$.
Then, all tasks in $K$ become feasible at some point in time, even those that are not assigned a layer. Consider the first task $k_i$ that becomes feasible among those that have no layer (break ties randomly). If this happens at time $t$, let $U$ be the set of tasks that completed by time $t$, and let $\ell=\max_{k_j \in U} \lambda(K,j)$. Then, from the procedure to assign layers to tasks, $k_i$ would have been assigned a layer $\lambda(K,i)\leq \ell+1$, which is a contradiction.
}
\end{proof} 

{\bf Topologies.} Let $K$ be a job, and $k_i$ and $k_j$ be two tasks of $K$. 
We say that $k_i$ \emph{depends} on $k_j$ if there exists 
a feasibility set $A_x\in f^K(k_i)$ for $k_i$
such that $k_j \in A_x$.

\begin{definition}
The \emph{skeleton} 
of a job~$K$ is
the directed graph $H_K=(V,E)$, where 
$V(H_K):=\{k_1,k_2,\ldots,k_{l_K}\}$ and $E(H_K) :=\{(k_j,k_i): k_i \mbox{ depends 
on } k_j\}$.
\end{definition}

It is worthwhile to mention that a skeleton does not define the feasibility function of a job. The 
two jobs presented in Table \ref{table:notdoable} are 
different jobs on the same set of tasks and with the same skeleton
(see Figure \ref{fig:skeleton}).
Nevertheless, one of the two jobs in Table \ref{table:notdoable} is doable and the other is not.  
Hence, the skeleton does not even
differentiate between doable and not doable jobs.

\red{The \emph{topology} of a job~$K$ is the directed graph 
obtained by mapping the skeleton of $K$ into the set of servers, where 
each task $k_i$ is mapped into its corresponding server $s_i^K$. 
\begin{definition}
Given a system $(S,P,\cA)$, the \emph{topology of the system} is the directed graph obtained by overlapping the topology of all jobs injected by $\cA$ in the system.
\end{definition}

Figures \ref{fig:skeletonjob1} and \ref{fig:skeletonjob2} show the skeleton of two jobs whose feasibility functions are described in tables \ref{table:job1} and \ref{table:job2}.
Figures \ref{fig:skeletonjob1} and \ref{fig:skeletonjob2} also show the layers of the jobs.
The topology of a system in 
which only those two jobs are injected is shown in Figure \ref{fig:topologysystem}.}

\begin{table}
\begin{minipage}{0.45\linewidth}
\centering
\includegraphics[width=0.9\linewidth]{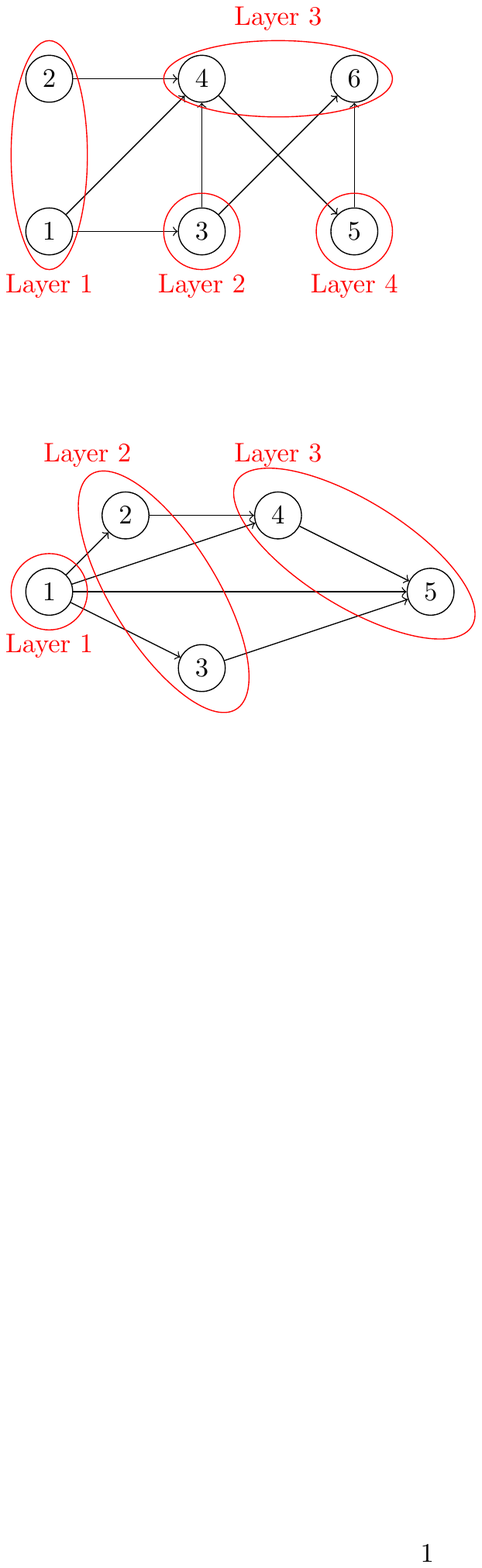}
\caption{Skeleton of job $R$ with set of tasks $\{1,2,3,4,5,6\}$. The layers of the job are circled in red. 
}
\label{fig:skeletonjob1}
\end{minipage}\hfill
\begin{minipage}{0.45\linewidth}
\centering
\caption{This table shows the feasibility function of job $R$ and the servers to which the tasks of job $R$ are assigned. }
\label{table:job1}
\begin{tabular}{|c|c|c|}
\hline
\begin{tabular}[c]{@{}c@{}}Task $i$ \\ of job $R$\end{tabular} & $f^R(i)$          & $s^R_i$ \\ \hline
$1$                                                            & $\{\emptyset\}$       & $s_1$   \\ \hline
$2$                                                            & $\{\emptyset\}$       & $s_1$   \\ \hline
$3$                                                            & $\{\{1\}\}$           & $s_1$   \\ \hline
$4$                                                            & $\{\{1,2,3\}\}$       & $s_2$   \\ \hline
$5$                                                            & $\{\{4\}\}$           & $s_3$   \\ \hline
$6$                                                            & $\{\{3\},\{5\}\}$ & $s_4$   \\ \hline
\end{tabular}
\end{minipage}
\vspace{2cm}

\begin{minipage}{0.45\linewidth}
\centering
\includegraphics[width=0.9\linewidth]{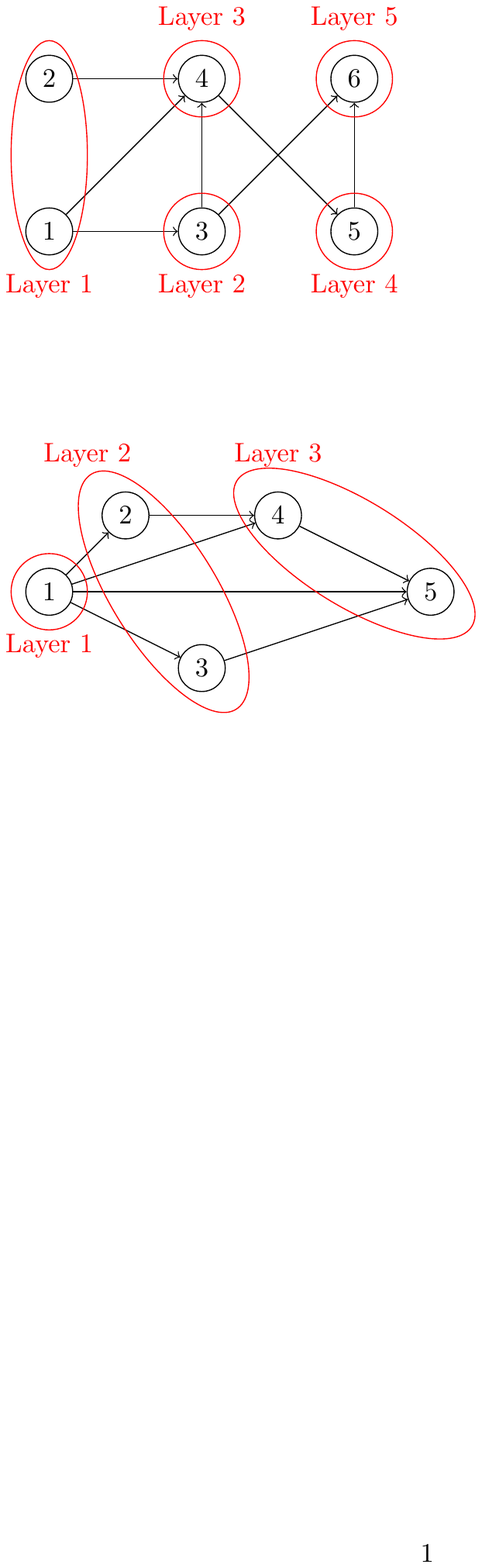}
\caption{Skeleton of job $M$ with set of tasks $\{1,2,3,4,5\}$. The layers of the job are circled in red. 
}
\label{fig:skeletonjob2}
\end{minipage}\hfill
\begin{minipage}{0.45\linewidth}
\centering
\caption{This table shows the feasibility function of job $M$ and the servers to which the tasks of job $M$ are assigned. }
\label{table:job2}
\begin{tabular}{|c|c|c|}
\hline
\begin{tabular}[c]{@{}c@{}}Task $i$ \\ of job $M$\end{tabular} & $f^M(i)$          & $s^M_i$ \\ \hline
$1$                                                            & $\{\emptyset\}$       & $s_4$   \\ \hline
$2$                                                            & $\{\{1\}\}$       & $s_3$   \\ \hline
$3$                                                            & $\{\{1\}\}$           & $s_3$   \\ \hline
$4$                                                            & $\{\{1,2\}\}$       & $s_2$   \\ \hline
$5$                                                            & $\{\{1,3\},\{4\}\}$           & $s_1$   \\ \hline
\end{tabular}
\end{minipage}

\vspace{2cm}

\centering
\begin{tikzpicture}[->,node distance=3cm,auto]

\node[state](A) {$s_1$};
\node[state] (B) [right of = A] {$s_2$};
\node[state] (C) [right of = B] {$s_3$};
\node[state] (D) [right of = C] {$s_4$};

\path (A) edge [loop above] (A)
	edge (B)
	edge [bend right] (D)
	(B) edge (C)
	(C) edge (D)
	(D) edge [bend right, dashed] (C)
	(D) edge [bend left, dashed ] (B)
	(C) edge [bend right, dashed] (B)
	(D) edge [bend right, dashed] (A)
	(C) edge [bend right, dashed] (A)
	(B) edge [bend right, dashed] (A)
	;
\end{tikzpicture}
\caption{This figure shows the topology of job $R$ (described in Figure \ref{fig:skeletonjob1} and Table \ref{table:job1}) in solid lines, the topology  of job $M$ (described in Figure \ref{fig:skeletonjob2} and Table \ref{table:job2}) in dashed lines, and, with all the lines, the topology of a system in which only these two jobs are injected by the adversary.}
\label{fig:topologysystem}
\end{table}



{\bf Scheduling policy.}
We assume that each server has an infinite buffer to 
store its own queue of tasks.
Every \cyan{active task waits}
in the queue of its corresponding server.
In each server, a scheduling policy $P$ specifies which task 
of all active tasks in its queue to serve next. We assume that
scheduling 
policies are greedy/work conserving (i.~e., a server 
always decides to serve if there is at least one active task 
in its queue).
Examples of policies are \emph{First-In-First-Out} (\textit{FIFO}) which gives 
priority to the task that first came in the queue, or 
\emph{Last-In-First-Out} (\textit{LIFO}) which gives priority to the task that 
came last in the queue. Other policies will be defined later in the 
document. 

{\bf Adversary.}
We assume that there is a malicious adversary $\cA$ who injects \emph{doable}
jobs into the system.
In order to avoid trivial overloads, the adversary is bounded in the 
following way. Let $N_s(I)$ be the total load injected by the adversary 
during  
time interval $I$ in server $s$ (i.~e., $N_s(I) = \sum t_i^K$ over 
all jobs $K$ injected during $I$ and tasks $k_i$ such that $s_i^K = 
s$). Then, for every server $s$ 
and interval $I$ 
the adversary is bounded by:
\begin{equation}\label{advbound}
N_s(I) \leq r|I| + b,
\end{equation}
where $0<r \leq 1$ is called the \emph{injection rate}, and \cyan{$b > 1$}
is called the \emph{burstiness} allowed to the adversary. Observe that 
(\ref{advbound}) implies $\max_{i,K} \{t_i^K\} \leq b$, since jobs are
injected 
instantaneously. 
An adversary that satisfies (\ref{advbound}) is called
a \emph{bounded} $(r,b)$-adversary, or simply an $(r,b)$-adversary. 

\cyan{As mentioned,} the system formed by an $(r,b)$-adversary $\cA$ injecting doable jobs in the set of 
servers $S$ using the scheduling policy $P$ is called an AJQ system 
$(S,P,\cA)$. 

The number of active tasks
in the queue of server $s$ at time $t$ is denoted $Q_s(t)$.
\begin{definition}
Let $(S,P,\cA)$ be an AJQ system. 
We say that the system $(S,P,\cA)$ is \emph{stable} if there exists a value
$M$ such 
that $Q_s(t) \leq M$ for all $t$ and for all $s \in S$, where $M$ may 
depend on the system parameters (adversary, servers, and jobs 
characteristics) but not on the time.
\end{definition}

\begin{definition}
Let $P$ be a policy. 
If a system $(S,P,\cA)$ is stable against any $(r,b)$-adversary $\cA$ with rate $r <
1$, then we say that the policy $P$ is \emph{universally stable}.
\end{definition}


In the next sections, we provide some results regarding both the 
stability and instability in the AJQ model.

\section{Stability and Instability of Scheduling Policies}

From the point of view of the scheduling policies (i.~e., how a server 
decides which task to \cyan{choose} from the set of active tasks pending to be 
executed), in this section we show stability of the policy that gives
priority to
the task (job) that has been for the longest period of time in the system.
On the other hand, we show that other well-known scheduling policies are not stable.


\subsection{Stability of \textit{LIS}}

The \textit{LIS} (Longest-In-System) scheduling policy gives priority to the
task (and hence the job) which has been in the system for the longest time. 
In this subsection, we show that any system $(S, \textit{LIS},\cA)$ is stable, 
for any $(r,b)$-adversary $\cA$ with $r< T_{min}/(T_{max}+D_{max})$.
We start by showing a bound on the time that a job spends in the system
until it is done.

Consider a job $K=\{k_1,k_2,\ldots,k_{l_K}\}$ injected at time $T_0$.
Let $T_i$ be the first time in which all tasks in the
$i$-\emph{th} layer of $K$ are completed. 
The time $T_{\lambda_K}$ is the time
when $K$ is done. 
Let $T$ be some time in the interval $[T_0,T_{\lambda_K}]$. 
We denote by $g_T$ the injection
time of the oldest job \cyan{that is still} in the system at time $T$. We define 
\[c := \max_{T\in [T_0,T_{\lambda_K}]} \{ T -g_T \}.\]

\begin{lemma}
Let $(S, \textit{LIS},\cA)$ be an AJQ system where $\cA$ is an
$(r,b)$-adversary with $r< T_{min}/(T_{max}+D_{max})$. Then,
\[
T_{\lambda_K}-T_0\leq \left(D_{max}+\frac{r(c+b)}{T_{min}}(T_{max}+D_{max})\right).
\]
\end{lemma}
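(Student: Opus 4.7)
The plan is to exploit LIS's priority structure: under LIS, tasks of $K$ can only be delayed by tasks from jobs injected strictly before $T_0$. By definition of $c$, every job present in the system during $[T_0, T_{\lambda_K}]$ was injected no earlier than $T_0 - c$, so every such competing older job was injected in a time window of length at most $c$ ending at $T_0$. This characterizes precisely the set of tasks that can ever block $K$'s progress at any server.

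Next, I would invoke the $(r,b)$-adversary bound on this window at each server $s$. The total work assigned by the adversary to server $s$ from jobs injected in this window is at most $r c + b$, which in the authors' accounting is bounded by $r(c+b)$; dividing by $T_{min}$ gives that at most $r(c+b)/T_{min}$ older tasks per server can ever compete with $K$. I would then bound the blocking contribution of each such older task at its server: at most $T_{max}$ of processing time (during which the server is busy with the older task rather than a task of $K$) plus at most $D_{max}$ accounting for the older task's activation delay, since an older task that becomes active after a $K$ task has been queued will jump ahead of it under LIS. Multiplying yields a per-server total delay bound of $\frac{r(c+b)}{T_{min}}(T_{max} + D_{max})$.

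To finish, I would trace a critical dependency chain of $K$ through its $\lambda_K$ layers, at each layer charging an activation delay of at most $D_{max}$ before the relevant task can join the queue, plus its queueing delay and processing time. The crucial observation is that each older task is consumed by a single server and contributes to the blocking of $K$ at most once, so the cumulative wait against older work along the critical chain (aggregated across all servers visited) remains bounded by the single per-server estimate above, rather than being multiplied by $\lambda_K$. Adding a single $D_{max}$ for the activation of the first-layer tasks (which must elapse before any progress can begin) yields the stated bound $D_{max} + \frac{r(c+b)}{T_{min}}(T_{max}+D_{max})$.

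The main obstacle is justifying the non-compounding of delay across the $\lambda_K$ layers, since a naive argument would add a $D_{max}+T_{max}$ term per layer. I would handle this via an amortized charging argument: every interval during which $K$'s critical chain is stalled at a server $s$ must be charged either to the processing of some older task at $s$ or to that task's activation delay, and by the second paragraph both charges are bounded uniformly per server by $\frac{r(c+b)}{T_{min}}(T_{max}+D_{max})$. The processing times of $K$'s own tasks along the chain are then either covered by server-busy intervals already charged or absorbed into the leading $D_{max}$ term, decoupling the bound from $\lambda_K$.
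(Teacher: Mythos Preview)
Your approach differs from the paper's, and the place where it breaks is exactly the step you flag as the main obstacle: the claim that the cumulative wait along the critical chain is bounded by a \emph{single} per-server estimate rather than by $\lambda_K$ copies of it. The sentence ``each older task is consumed by a single server and contributes to the blocking of $K$ at most once'' is true but does not yield the conclusion you draw from it. Different servers on the chain carry \emph{different} older tasks, and the adversary bound $N_s([T_0-c,T_0])\leq rc+b$ is a \emph{per-server} bound; summing along $\lambda_K$ servers gives up to $\lambda_K\cdot\frac{r(c+b)}{T_{min}}$ blocking tasks, not $\frac{r(c+b)}{T_{min}}$. Nothing in your charging scheme prevents each server on the chain from presenting a fresh batch of older (delayed but higher-priority under LIS) tasks that only become active just after $K$'s task joins that queue. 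You also add only one $D_{max}$ globally, whereas each layer incurs its own activation delay before its tasks enter a queue. Finally, notice that your argument never uses the hypothesis $r<T_{min}/(T_{max}+D_{max})$; this is a strong signal that something essential is missing.

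The paper's proof avoids the $\lambda_K$ blow-up by a different mechanism. It does \emph{not} try to globally bound the pool of blockers; instead it observes that at time $T_{i-1}$ the oldest surviving job was injected no earlier than $T_{i-1}-c$, so the window of potential blockers for layer $i$ is $[T_{i-1}-c,\,T_0]$, of length $c-(T_{i-1}-T_0)$, which \emph{shrinks} as $K$ advances. This yields a linear recurrence $T_i\leq \epsilon\,T_{i-1}+A$ with contraction factor $\epsilon=1-\frac{r(T_{max}+D_{max})}{T_{min}}\in(0,1)$, and it is precisely this contraction (guaranteed by the rate hypothesis) that makes the solved recurrence independent of $\lambda_K$. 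To repair your argument you would need either to recover this shrinking-window recurrence or to produce a genuine concurrency argument showing that while $K$ waits at one server, the older work at the subsequent servers is being drained; the latter is not true in general because those older tasks may themselves be delayed feasible and only activate after $K$ arrives.
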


\begin{proof}
Let $K$ be a job.
Let $k^*$ be the last task to be processed in the 
$i$-\emph{th} layer of $K$.
Hence, $k^*$ is complete at time $T_i$.
All tasks in the $i$-\emph{th} layer of $K$ become feasible by 
time $T_{i-1}$, including $k^*$.  From 
definition of $c$, only tasks injected in the interval
$[T_{i-1}-c,T_0]$ can block $k^*$ in its server. The tasks injected 
in this interval, including all the tasks in the $i$-\emph{th}
layer of $K$, are at most $r(T_0 - T_{i-1} + c + b)/T_{min}$. 
All these tasks are processed in at most $r(T_0 - T_{i-1} + c +
b)(T_{max}+D_{max})/T_{min}$ time. Hence:
\begin{eqnarray*}
T_i &\leq& T_{i-1} + D_{max} + \frac{r(T_0 - T_{i-1} + c + 
b)}{T_{min}}(T_{max}+D_{max})\\
&=&T_{i-1}\left(1 - \frac{r(T_{max}+D_{max})}{T_{min}}\right)
+ D_{max} + \frac{r(T_0+c+b)}{T_{min}}(T_{max}+D_{max})
\end{eqnarray*}

Let $\epsilon:=1 - r(T_{max}+D_{max})/T_{min}$. 
Solving the recurrence, we obtain:
\begin{eqnarray*}
T_{\lambda_K} &\leq& \epsilon^{\lambda_K}T_0 + \left(D_{max} + 
\frac{r(T_0+c+b)}{T_{min}}(T_{max}+D_{max})\right)\sum_{i=0}^{\lambda_K-1
}\epsilon^i\\
&=& \epsilon^{\lambda_K}t_0 + \left(D_{max} + 
\frac{r(T_0+c+b)}{T_{min}}(T_{max}+D_{max})\right)\left( 
\frac{1-\epsilon^{\lambda_K}}{1-\epsilon}\right)\\
&=& \left(D_{max}+\frac{r(c+b)}{T_{min}}(T_{max}+D_{max})\right) +T_0.
\end{eqnarray*} 
Which proves the lemma. 
\end{proof}

Since we are considering a case where $r< T_{min}/(T_{max}+D_{max})$, 
it holds that $r(T_{max}+D_{max})/T_{min} = 1-\epsilon < 1$. Hence, 
we rewrite the lemma as follows:
\[
T_{\lambda_K}-T_0\leq  (1-\epsilon)c +  \left(D_{max}+\frac{rb}{T_{min}}(T_{max}+D_{max})\right).
\]

\begin{theorem}
\label{thm:lis}
Let $(S, \textrm{LIS},\cA)$ be an AJQ system where $\cA$ is an
$(r,b)$-adversary with $r< T_{min}/(T_{max}+D_{max})$.
Then, all jobs spend less than
\[
\left(\frac{D_{max}T_{min}+rb(T_{max}+D_{max})}{T_{min} - 
r(T_{max}+D_{max})}\right)
\]
time in the system.
\end{theorem}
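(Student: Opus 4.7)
The plan is to close the recursion from the preceding lemma by induction on the order in which jobs are injected. Write the lemma's conclusion compactly as $T_{\lambda_K}-T_0 \leq (1-\epsilon)c + A$, where $A := D_{max} + \frac{rb}{T_{min}}(T_{max}+D_{max})$ and $\epsilon := 1 - r(T_{max}+D_{max})/T_{min}$. The hypothesis on $r$ gives $\epsilon \in (0,1]$, and an easy computation shows that the target bound in the theorem equals $\beta := A/\epsilon$. Since the adversary's constraint $N_s(I) \leq r|I|+b$ together with $T_{min} > 0$ forces only finitely many jobs to arrive in any finite interval, I may enumerate the injected jobs by injection time as $K^{(1)}, K^{(2)}, \ldots$ (ties broken arbitrarily), with injection time $T_0^{(n)}$ and completion time $T_{\lambda_{K^{(n)}}}^{(n)}$.

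I would then prove by strong induction on $n$ that $K^{(n)}$ completes and that $T_{\lambda_{K^{(n)}}}^{(n)} - T_0^{(n)} \leq \beta$. The key LIS-specific observation is that at any time $T$ the oldest job in the system has been there for exactly $T-g_T$ units. Fix $n$ and any $T \in [T_0^{(n)}, T_{\lambda_{K^{(n)}}}^{(n)}]$: the oldest job at $T$ is either some $K^{(m)}$ with $m<n$, in which case the induction hypothesis yields $T-g_T \leq T_{\lambda_{K^{(m)}}}^{(m)} - T_0^{(m)} \leq \beta$, or it is $K^{(n)}$ itself, in which case $T-g_T \leq T_{\lambda_{K^{(n)}}}^{(n)} - T_0^{(n)}$. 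Taking the maximum over $T$, the quantity $c$ from the preceding lemma applied to $K^{(n)}$ satisfies $c \leq \max\{\beta,\, T_{\lambda_{K^{(n)}}}^{(n)} - T_0^{(n)}\}$.

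Plugging this into the lemma gives two cases. If $c \leq \beta$, then $T_{\lambda_{K^{(n)}}}^{(n)} - T_0^{(n)} \leq (1-\epsilon)\beta + A = \beta$, using the identity $\epsilon\beta = A$. Otherwise $c \leq T_{\lambda_{K^{(n)}}}^{(n)} - T_0^{(n)}$, and the lemma yields $T_{\lambda_{K^{(n)})}}^{(n)} - T_0^{(n)} \leq (1-\epsilon)(T_{\lambda_{K^{(n)}}}^{(n)} - T_0^{(n)}) + A$, which rearranges to $T_{\lambda_{K^{(n)}}}^{(n)} - T_0^{(n)} \leq A/\epsilon = \beta$. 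Either way the inductive step closes, proving the theorem.

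The part I expect to require the most care is justifying that $T_{\lambda_{K^{(n)}}}^{(n)}$ is actually finite at each inductive step, so that the preceding lemma is applicable. The natural way to handle this is to argue that if $K^{(n)}$ never completed, then since all earlier jobs do complete by the induction hypothesis, LIS would hand $K^{(n)}$ top priority after some finite time, and the layered recurrence inside the proof of the lemma (applied now with $c$ bounded by $\beta$) would yield a finite completion time, contradicting non-completion. Everything else reduces to the routine algebra sketched above.
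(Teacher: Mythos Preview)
Your argument is correct and is essentially the same as the paper's, just framed as strong induction on injection order rather than as a minimality/contradiction argument. The paper assumes some job stays longer than $\beta=A/\epsilon$, takes the first such job, and derives from the lemma that its lifetime is at most $(1-\epsilon)\cdot(\text{lifetime})+A\le\beta$, a contradiction; your case split on whether $c$ is dominated by an earlier job or by $K^{(n)}$ itself is exactly the same computation made explicit, and your remark about first establishing finiteness of $T_{\lambda_{K^{(n)}}}^{(n)}$ is a detail the paper glosses over.
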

\begin{proof}
It is worth mentioning that $c$ is the only time-depending parameter 
in the bound given by the previous lemma. Hence, if we show that
$c$ actually does not depend on time, we will be showing the theorem. 
We prove it by contradiction. Assume that there is a moment in which $c$
is strictly larger than:
\[
\left(\frac{D_{max}T_{min}+rb(T_{max}+D_{max})}{T_{min} - 
r(T_{max}+D_{max})}\right).
\]
Hence, there has been a job in the system for a period of time strictly
longer than:
\[
\left(\frac{D_{max}T_{min}+rb(T_{max}+D_{max})}{T_{min} - 
r(T_{max}+D_{max})}\right).
\]
If we apply the previous lemma to this job, it should have been absorbed
in at most:
\begin{eqnarray*}
&&(1-\epsilon)c +  \left(D_{max}+\frac{rb}{T_{min}}(T_{max}+D_{max})\right)
\\
&=& c - \epsilon \left(\frac{D_{max}T_{min}+rb(T_{max}+D_{max})}{T_{min} - 
r(T_{max}+D_{max})}\right) + \left(D_{max}+\frac{rb}{T_{min}}(T_{max}+D_{max})\right)\\
&<& c
\end{eqnarray*}
time, which is a contradiction. 
\end{proof}

%

\subsection{Scheduling Policies that are Unstable}

Here, we show that a number of well-known policies such as 
First-In-First-Out (\textit{FIFO}), Nearest-To-Go (\textit{NTG}), Furthest-From-Source 
(\textit{FFS}), and Last-In-First-Out (\textit{LIFO}), are unstable, even for arbitrarily small injection rates.
While the meaning of \textit{FIFO} and \textit{LIFO} in the context of AJQ is clear (and similar as in AQT), we need to define \textit{NTG} and \textit{FFS}.

For a task $k_i$ of job $K$ the \emph{distance from source} is 
the distance between the layer of $k_i$ and layer one
($\lambda(K,i)-1$), and the distance to go is the distance between the number of layers of $K$ and $k_i$'s layer ($\lambda_K - \lambda(K,i)$).
Hence, \textit{FFS} gives priority to the task with largest distance from source and \textit{NTG} gives priority to the task with smallest distance to go.

\begin{theorem}
\label{the:unstable}
\textit{FIFO}, \textit{NTG}, \textit{FFS}, and \textit{LIFO} are unstable for every $r>0$.
\end{theorem}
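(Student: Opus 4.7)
The plan is to construct, for each of the four policies, an explicit bounded adversary that drives the queue at some server to grow without bound. Because AJQ subsumes AQT (by the correspondence between packets traversing a path and a chain of tasks distributed across servers), the classical instability constructions of Andrews et al.\ and Borodin et al.\ already yield instability in AJQ, but only for $r$ above some threshold strictly below $1$. The extra modelling power of AJQ --- arbitrary activation delays $d_i^K$, task-level dependencies through the feasibility function $f^K$, and the freedom to map several tasks of a single job to the same server --- is what lets us push the instability threshold down to \emph{every} $r>0$.

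For each policy I would design a doable job template consisting of (i) a short ``lead'' dependency chain whose layers are processed on auxiliary servers and (ii) a ``cluster'' of many parallel tasks assigned to a single target server $s^\star$. By tuning the activation delays $d_i^K$ I arrange that at the start of round $k$ the cluster tasks of the freshly injected jobs become active at $s^\star$ exactly when the residual cluster of previous rounds is still pending there. A round then consists of the adversary injecting $N$ copies of the template over an interval of length $\Theta(N/r)$ (so that condition~(\ref{advbound}) is satisfied), and I compare the queue length at $s^\star$ at the start and at the end of the round.

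Given this template, for each policy I would show that the scheduling decisions made at $s^\star$ do not drain the queue quickly enough. Under \textit{FIFO} the older cluster tasks are served first, but the lead chains of the new jobs keep enqueuing further cluster tasks at $s^\star$ before the old backlog is cleared. Under \textit{LIFO} the newest cluster always pre-empts the older one, so the initial backlog is never touched. Under \textit{NTG} the template places the cluster tasks of older jobs in deeper layers (so their distance-to-go is large), while the newly injected cluster tasks sit in shallow layers near completion, so the scheduler always prefers them. Under \textit{FFS} the template does the dual: the cluster tasks of the newly injected jobs are made to lie at the deepest layer of the dependency DAG, so they are served first while the older cluster tasks starve. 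In each case a one-line recurrence of the form $Q_{k+1}\ge Q_k+\Delta$ with $\Delta>0$ gives $Q_k\to\infty$.

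The main obstacle is ensuring that every construction works for arbitrarily small $r$ while respecting $N_s(I)\le r|I|+b$. The key observation that makes this possible is already built into the model: $T_{max}$, $D_{max}$ and $L$ are only required to be constants \emph{independent of time}, not of the rate $r$. This lets the adversary use long activation delays and large clusters per job, producing enough per-round amplification to overcome any given injection rate. The verification will essentially amount to four small but careful case analyses, one per policy, each showing that the per-round ``growth'' $\Delta$ can be made positive by scaling the cluster size and the activation delays as a function of $r$.
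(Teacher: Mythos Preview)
Your proposal rests on a mistaken premise. You write that the classical AQT instability constructions ``already yield instability in AJQ, but only for $r$ above some threshold strictly below $1$,'' and then set out to exploit AJQ-specific features (large activation delays, clusters of tasks on one server) to push the threshold down to every $r>0$. In fact, the relevant AQT results already give instability for \emph{every} $r>0$: Borodin et al.\ show this for \textit{NTG}, \textit{FFS}, and \textit{LIFO}, and a later result (cited in the paper) shows it for \textit{FIFO}. The paper's proof is therefore nothing more than the reduction you already identify in your first paragraph: map links to servers, packets to chain jobs with unit processing time and zero activation delay, and import the AQT instability verbatim. No new construction is needed.

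Beyond being unnecessary, your sketch has real gaps. The crucial step is the claim that you can obtain a per-round increment $\Delta>0$ at the target server $s^\star$ while respecting $N_{s^\star}(I)\le r|I|+b$. Since all the cluster tasks you place at $s^\star$ count toward $N_{s^\star}$ at injection time (regardless of their activation delay), over a round of length $\Theta(N/r)$ you can inject at most $\Theta(N)+b$ units of processing time into $s^\star$, while a work-conserving $s^\star$ processes $\Theta(N/r)$ units whenever it is not idle. So any growth must come entirely from idle time created by activation delays, and you have not shown how to \emph{repeatedly} manufacture enough idle time at $s^\star$ round after round; your one-line recurrence is asserted rather than derived. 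The \textit{FIFO} case in particular (``lead chains keep enqueuing further cluster tasks before the old backlog is cleared'') does not by itself imply net growth, since \textit{FIFO} at a single bottleneck with injection rate below $1$ is stable unless you force idleness. None of this is irreparable, but the actual mechanism that produces unbounded queues is missing from your outline, whereas the paper's reduction sidesteps all of it.
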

\begin{proof}
First, we highlight that, given a system $(G,P,\cA)$ in AQT, it can be 
modeled as a system $(S,P,\cA')$ in AJQ as follows:

\begin{itemize}
\item
For each link $l$ in $G$, there is a unique server $s_l$ in $S$, which we call its \emph{equivalent} server.
\item
The scheduling policy $P$ is the same  both in AQT and in AJQ.
\item
For each packet $p$ injected by $\cA$, the adversary $\cA'$ injects a 
job $K$ such that:
\begin{itemize}
    \item
    For each link $l$ in the path of packet $p$, there is a task $k_l$ in $K$ to be executed in server $s_l$.
    \item
    If $l$ is the first link in the path of $p$, then $k_l$ is the initial task of job $K$.
    \item
    If the path of packet $p$ traverses link $l$ \cyan{immediately} before it traverses link $l'$ then task $k_{l'}$ \cyan{only} depends on task $k_{l}$.
    \item 
    The processing time of each task is $1$ and its activation delay
    is $0$.
\end{itemize}

\end{itemize}
Clearly, if $(G,P,\cA)$ is unstable for a given injection rate then 
$(S,P,\cA')$ will be also unstable for the same injection rate (i.~e., 
all the unstable scheduling policies in AQT are also unstable in AJQ).

By using the results in~\cite{BorodinKRSW01}, we have that \textit{NTG}, \textit{FFS},  and \textit{LIFO} are unstable (in AQT) for every $r>0$, and by using the result in~\cite{1039916} (in AQT) we have that \textit{FIFO} is also unstable for every $r>0$. Therefore, the theorem directly follows. 
\end{proof}

\section{Topological Stability}
\label{l-topology}
In this section we show stability for systems with feed-forward topology.
We say that a system has \emph{feed-forward} topology
if it is possible to enumerate the servers from $1$ to $n$, so that
every directed arc in the topology of the system goes from a server 
with a smaller label to a server with a larger label.

\begin{theorem}
Let $(S,P,\cA)$ be an AJQ system with feed-forward topology. 
Then, for any policy $P$ and any 
$(r,b)$-adversary $\cA$ with injection rate 
$r \leq 1$, the system $(S,P,\cA)$ is stable.
\end{theorem}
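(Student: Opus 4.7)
My plan is to prove stability by induction on the server index $s \in \{1,\dots,n\}$ in the feed-forward ordering, showing that the active workload $W_s^{\mathrm{active}}(t)$ (the total remaining processing time of active tasks at server $s$) is uniformly bounded in $t$. Since every task has processing time at least $T_{\min}$, this yields a uniform bound $Q_s(t) \leq W_s^{\mathrm{active}}(t)/T_{\min} + 1$, which is the definition of stability.

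The key observation is the following. Let $B_s([t_1,t_2])$ denote the total work becoming active at server $s$ during interval $[t_1,t_2]$. If $B_s$ admits a linear envelope $B_s([t_1,t_2]) \leq r(t_2 - t_1) + c_s$ over every interval, then inside any busy period $[t_0,t]$ starting from $W_s^{\mathrm{active}}(t_0)=0$ (throughout which the work-conserving server is continuously busy) one has $W_s^{\mathrm{active}}(t) = B_s([t_0,t]) - (t - t_0) \leq (r-1)(t - t_0) + c_s \leq c_s$ because $r \leq 1$. So it suffices to produce such an envelope at every server. To that end, I would decompose $B_s([t_1,t_2])$ into the work of server-$s$ tasks injected during $[t_1, t_2]$, bounded directly by $N_s([t_1,t_2]) \leq r(t_2-t_1) + b$ via the adversary constraint, plus the work of tasks injected strictly before $t_1$ that first activate inside $[t_1, t_2]$, which is at most $Y_s(t_1)$, where $Y_s(t)$ denotes the total processing time of server-$s$ tasks already injected but not yet active at time $t$ (either in their activation delay or blocked by incomplete dependencies). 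Thus the envelope holds with $c_s = b + \sup_t Y_s(t)$, and what remains is to bound $Y_s$ uniformly.

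For the base case $s = 1$, feed-forwardness forbids any arc into server $1$ (and thus any self-loop), so every task at server $1$ is an initial task and becomes active at most $D_{\max}$ after its injection. Hence $Y_1(t) \leq N_1([t - D_{\max}, t]) \leq r D_{\max} + b$, giving $c_1 = 2b + r D_{\max}$. For the inductive step, assume a bound $c_{s'}$ on the active workload at every server $s' < s$. A server-$s$ task pending but not active at time $t$ is either still in its activation delay (bounded just as in the base case) or blocked by some incomplete dependency at a server $s' < s$. Since dependencies lie within a single job and every job has at most $L$ tasks, each upstream incomplete task of a job $K$ can block at most $L - 1$ other tasks of $K$, and in particular at most $L - 1$ server-$s$ tasks. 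By the inductive hypothesis the number of incomplete tasks at each $s'$ is bounded (it is at most $Q_{s'} + Y_{s'}/T_{\min}$, both of which are bounded), and summing over $s' < s$ and multiplying by $(L-1)T_{\max}$ gives a bound on the blocked contribution to $Y_s$. The resulting $c_s$ is a constant depending only on $c_1,\dots,c_{s-1}$, $L$, $T_{\max}$, $T_{\min}$, $D_{\max}$, and $b$, which closes the induction.

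The main obstacle is the inductive step, specifically bounding $Y_s$ uniformly in time. Under adversarial policies such as \textit{LIFO} at an earlier server, individual upstream tasks can wait arbitrarily long, so one cannot bound the injection-to-activation delay of a server-$s$ task by any uniform constant and then just invoke the per-server adversary bound on a sliding window. The resolution is to avoid per-task delay bookkeeping altogether and instead count: how many upstream tasks can be simultaneously incomplete (bounded by induction via $Q_{s'}$ and $Y_{s'}$), multiplied by the at most $L - 1$ server-$s$ tasks that each such upstream task can block (bounded because dependencies are confined to a single job). Making this counting precise, and verifying that the constants $c_s$ remain finite as the propagation reaches $s = n$, is the technical heart of the argument.
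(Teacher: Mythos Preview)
Your approach is essentially the paper's: induct on the feed-forward index, run a busy-period argument at server $s$, and charge the not-yet-active server-$s$ work to the bounded number of incomplete upstream tasks multiplied by the job-size cap $L$. The paper packages active and not-yet-active work into a single quantity $\tau_j(t)$ and bounds it directly, whereas you split into $W_s^{\mathrm{active}}$ and $Y_s$ and derive an arrival envelope $B_s\le r|I|+b+\sup_t Y_s(t)$; the bookkeeping differs, the engine is the same.

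One step needs tightening. In the inductive step you say the delayed server-$s$ tasks are ``bounded just as in the base case,'' i.e.\ by $N_s([t-D_{\max},t])$. That base-case bound works only because every server-$1$ task is initial, so feasibility coincides with injection. For $s>1$ a non-initial task can become feasible (and enter its activation delay) arbitrarily long after injection, so it need not have been injected in $[t-D_{\max},t]$, and the windowed adversary bound does not cover it. The fix is immediate and needs no new idea: observe that any server-$s$ task not yet active at time $t$ whose job was injected before $t-D_{\max}$ must have been \emph{blocked} at time $t-D_{\max}$ (it cannot have been delayed then, else it would already be active by $t$); hence charge it to the upstream incomplete tasks at time $t-D_{\max}$ rather than at time $t$. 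This yields
\[
Y_s(t)\ \le\ N_s([t-D_{\max},t])\ +\ (L-1)\,T_{\max}\!\!\sum_{s'<s}\bigl(\text{incomplete tasks at }s'\text{ at time }t-D_{\max}\bigr),
\]
and the right-hand side is bounded by the inductive hypothesis exactly as you intended.
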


\begin{proof}
Let $(S,P,\cA)$ be an AJQ system with feed-forward topology. 
Without loss of generality, assume that the ordering of the set of 
servers that makes the system feed-forward is $s_1,s_2,\ldots,s_n$.
For simplicity, we only use the position $j$ to denote server $s_j$.
Let $\tau_j(t)$ be the time that server $j$ would
need to completely serve (drain) all its pending tasks present
in the system at time $t$ if they were all active and no new 
task were injected:
\[
\tau_j(t) := \sum_{K(j,t)} t_i^K,
\]
where $K(j,t)$ is the set of pairs $(K,i)$ such that $K$ was 
injected by time $t$, $s_i^K=j$, and task $k_i$ has not been completed in server $j$. 
%
We define a potential function $\Phi(\cdot)$ as follows:
\begin{eqnarray*}
\Phi(0):= D_{max} + b; \ \ \ 
\Phi(1):= \tau_1(0) + \Phi(0) + b,
\end{eqnarray*}
where $\tau_j(0)$ denotes the time server $j$ requires to process all
its tasks present 
in the system before the adversary starts injecting jobs in
the system. For $2\leq j\leq n$, $\Phi(j)$ is defined as:
\[
\Phi(j) = \tau_j(0) + \Phi(0)+\frac{\sum_{i=1}^{j-1} \Phi(i)}{T_{min}}\cdot 
L\cdot (T_{max}+D_{max})+b,
\]  


To prove this theorem, we show that for all $1\leq j \leq n$
and for all $t$,
$\tau_j(t)\leq \Phi(j).$
We use induction on $j$,
the position of the servers in the ordering of $S$.

{\bf Case $j=1$:}
Consider some time $T$. We prove that 
$\tau_1(T)\leq \Phi(1)$.
First, assume that for all time $t \in [0,T]$ there is at least 
one active task in the queue of server $1$. 
Then, server $1$ has been continuously working during the interval 
$[0,T]$. 

The time required to process all its queue at time $T$ 
is the time it would need to process the tasks present at time $0$ in 
its queue, plus the time required to process the load injected by the 
adversary during that interval, minus the load processed during that 
interval. In the form of an equation, the previous amount of time is: 
\begin{eqnarray*}
\tau_1(T) \leq \tau_1(0) + T + b - T = \tau_1(0) + b \leq \Phi(1).
\end{eqnarray*}

Otherwise, there exists some time $t \in [0,T]$ such that
there is no active task in the queue of server $1$ at time $t$. 
Let $t^*$ be the largest of such times. Note that, in that case, 
all tasks injected in server $1$ before time $t^*$, and present at 
time $t^*$, were injected after time $t^*-D_{max}$, since every task
injected before that time is active at time $t^*$. Therefore, 
by restriction (\ref{advbound}), it holds:
$\tau_1(t^*)\leq D_{max} + b$. 

Then, the time required to process all its queue at time $T$ 
is the time it would need to process the tasks present at time 
$t^*$, plus the time required to process the load injected by the 
adversary during the interval $[t^*,T]$, 
minus the load processed during the same  
interval of time. 
In the form of an equation, the previous amount of time is:
\begin{eqnarray*}
\tau_1(T) \leq \tau_1(t^*) + (T-t^*) + b - (T-t^*)
\leq D_{max} + b + b
= \Phi(0) + b
\leq \Phi(1).
\end{eqnarray*}

{\bf Case $j>1$:}
The inductive hypothesis is $\tau_{i}(t) \leq \Phi(i)$ for all 
for all $t$ and for all $1\leq i < j$.
By inductive hypothesis then, 
the amount of tasks in servers $i<j$ is at most
$
\frac{\sum_{i=1}^{j-1}\Phi(i)}{T_{min}},
$
the number of tasks they can trigger in server $j$ is at most
$
\frac{\sum_{i=1}^{j-1}\Phi(i)}{T_{min}} \cdot L,
$
and the processing time for all those tasks is at most
$
\frac{\sum_{i=1}^{j-1}\Phi(i)}{T_{min}} \cdot L\cdot (T_{max}+D_{max}).
$

Consider again some time $T$. 
We consider two cases equivalent to those considered in the case $j=1$.
First, server $j$ has at least one active task in its queue during all
the interval $[0,T]$. Therefore, server $j$ has processed tasks 
during all 
that time. In that case, server $j$ would need all the time required to 
process the tasks present at time $0$ in its queue, plus all the time 
required to process the tasks triggered by tasks in previous 
servers, plus all the time required to process the load injected by the 
adversary during the interval $[0,T]$, minus the load processed during 
that interval. Which, in the form of an equation is: 
\begin{eqnarray*}
\tau_j(T) &\leq& \tau_j(0) + \frac{\sum_{i=1}^{j-1}\Phi(i)}{T_{min}} \cdot 
L\cdot (T_{max}+D_{max})+ T + b - T \\
 &=& \tau_j(0) + \frac{\sum_{i=1}^{j-1}\Phi(i)}{T_{min}} \cdot L\cdot 
 (T_{max}+D_{max})+  b \leq \Phi(k).
\end{eqnarray*}

Assume now that there is some time $t \in [0,T]$ such that 
there is no active task in the queue of server $j$ at time $t$. 
Let $t^*$ be the largest such time. An analysis equivalent to 
the one presented in the case $j=1$ shows that 
$\tau_j(t^*)\leq D_{max} + b$.   
Therefore, if we compute $\tau_j(T)$ equivalently to the previous cases,
we obtain: 
\begin{eqnarray*}
\tau_j(T) &\leq& \tau_j(t^*) + \frac{\sum_{i=1}^{j-1}\Phi(i)}{T_{min}} \cdot 
L\cdot (T_{max}+D_{max})+ (T-t^*) + b - (T-t^*)\\ 
&\leq& D_{max} + b + \frac{\sum_{i=1}^{j-1}\Phi(i)}{T_{min}} \cdot 
L\cdot (T_{max}+D_{max}) + b \leq \Phi(k).
\end{eqnarray*}

Hence, 
the time required by server $j$ to drain its queue 
is bounded by $\Phi(j)$, a function that does not depend on time. 
In conclusion, at any time, there are at most $\Phi(j)/T_{min}$ tasks 
in the queue of server $j$, and the system is stable. 
\end{proof}

We showed that a 
feed-forward topology is a sufficient condition for stability in a 
system. Nevertheless, this condition is not necessary. Indeed,
as we have shown before, for any system $(G,P,\cA)$ in the AQT model, 
there is an equivalent system $(S,P,\cA')$ in the AJQ model. 
We know that, in the AQT model, any system with a ring network 
(i.~e., a directed cycle) is stable with any scheduling policy and 
against any adversary. Therefore, the equivalent system in the AJQ model 
will also be stable. Nevertheless, such AJQ system has a topology that 
is not feed-forward.

\section{Job Properties that can Affect the Stability of the System}

In this section, we show that some of the features of the injected jobs 
can play a key role regarding the stability of the system. Namely, we 
show that both the tasks' processing time and activation delays are 
factors that, individually, can cause instability. We also show that 
the feasibility function can lead, by itself, to instability.

\subsection{Tasks' Processing Times}
We show that the processing times of the tasks can affect the stability of
the system. Namely, a stable system can be transformed into unstable by 
varying the processing time of some of their tasks, even if the adversary has the same rate $r$ in both systems.
Let \textit{LCT-LIS} be the scheduling policy that gives priority to the task 
with longest processing time at the current server, breaking ties 
according to the longest-in-system policy.

\begin{proposition}
\label{propos:LET-LIS}
There exists a server set $S$ and an adversary $\cA$ with injection rate 
$r>1/\sqrt{2}$ such that the system $(S,\textit{LCT-LIS},\cA)$ is 
unstable.
\end{proposition}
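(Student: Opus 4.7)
The plan is to construct an explicit AJQ system whose queue lengths grow unboundedly under \textit{LCT-LIS}. I would take $S=\{s_1,s_2\}$, set all activation delays $d_i^K$ to $0$, and allow only two task processing times: a short length $1$ and a long length $L$, with $L$ chosen large at the end. Because \textit{LCT-LIS} serves long tasks before short tasks in the same queue, long tasks will act as \emph{blockers}, and the adversary's task is to keep an ever-growing population of long tasks trapped behind older long tasks of equal length. I would use two symmetric kinds of doable two-task jobs: type $\alpha$ starts with a short task on $s_2$ (layer $1$) followed by a long task on $s_1$ (layer $2$), and type $\beta$ is its mirror image. The dependency encoded by the feasibility function means that a long task on $s_1$ becomes feasible only after its short predecessor on $s_2$ completes; this is the lever that lets the adversary convert rate-$r$ injections at $s_2$ into long tasks that arrive \emph{active} at $s_1$ on a schedule independent of the rate budget at $s_1$.

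Next I would organize the adversarial schedule into rounds, maintaining the invariant that at the start of round $k$, server $s_1$ holds $N_k$ long tasks in its queue (all tied in length, hence served by LIS). Each round has two phases. In phase $1$, of duration roughly $N_k L$, the $N_k$ long tasks on $s_1$ drain; meanwhile the adversary uses its full rate at $s_1$ to inject fresh type-$\alpha$ jobs whose long follow-ups become active at $s_1$ but sit behind the older initial long tasks under LIS, and it uses its full rate at $s_2$ to inject type-$\beta$ jobs with the roles of the two servers reversed. In phase $2$, of duration roughly $r N_k L$, the long tasks deposited on $s_1$ during phase $1$ are served, and by symmetry further long tasks get deposited on $s_1$ via completions of the type-$\beta$ short precursors located on $s_1$. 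Accounting of the work injected at each server in each phase should yield a recurrence of the form $N_{k+1} \geq \mu\, N_k - C$, where $\mu$ is a multiplier depending only on $r$ and $C$ is an additive constant depending only on $b$, $T_{max}$, $D_{max}$, and $L$. The two-server symmetry doubles the per-round multiplier compared to a naive one-server analysis, giving $\mu = 2r^2$ up to lower-order corrections; choosing $L$ large enough to make $C$ negligible, the recurrence diverges precisely when $2r^2 > 1$, i.e., when $r > 1/\sqrt{2}$, forcing $N_k \to \infty$ and contradicting stability.

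The main obstacle will be making \textit{LCT-LIS} tie-breaking consistently favor the adversary throughout each round. Whenever many equally-long tasks share a queue, LIS must order them so that older ones clear first and newer ones accumulate, which forces a precise staggering of the injection times within each phase. It is also necessary to spread the injections finely enough that the adversarial bound~(\ref{advbound}) holds on \emph{every} subinterval, not only those aligned with phase boundaries, and to verify that short-task backlogs do not secretly drain during brief idle windows at either server. These are bookkeeping difficulties rather than conceptual ones: the key insight producing the quadratic threshold $r > 1/\sqrt{2}$ is the two-server symmetry, which effectively lets the adversary spend its full rate budget twice per round, whereas any single-server argument would only compound the budget once and could not beat $r > 1$.
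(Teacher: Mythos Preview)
Your direct construction has a genuine gap. Under \textit{LCT-LIS}, long tasks at a server \emph{always} preempt short ones. In your phase~2 you assert that ``the type-$\beta$ short precursors located on $s_1$'' complete, but at that moment $s_1$ still holds the $\approx r N_k$ type-$\alpha$ long tasks deposited during phase~1; those long tasks have strict priority, so the short type-$\beta$ tasks remain blocked. In fact, as long as $s_1$ contains any long task, no short task there is ever served---so the type-$\beta$ branch of your construction never fires during phases~1 and~2 as you describe them. There is a second slip: even if those short precursors did complete, their long successors are assigned to $s_2$, not $s_1$, so they would not be ``deposited on $s_1$'' as you claim. Without the type-$\beta$ contribution the per-round multiplier drops from the claimed $2r^2$ to at best $r$ (phase~1) plus $r^2$ (phase~2) on $s_1$ alone, and the argument for the threshold $r>1/\sqrt{2}$ collapses. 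Two servers are almost certainly too few here: the standard instability gadgets for LIS-type tiebreaking need auxiliary ``confinement'' paths so that newly created tasks can be kept from draining while the old backlog is being served, and the construction underlying the $1/\sqrt{2}$ bound uses at least four servers.

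For comparison, the paper does not build an adversary from scratch. It invokes Theorem~26 of \cite{BlesaCFLMSST09}, which already exhibits a CAQT system unstable under \textit{LPL-LIS} (longest packet length, ties by LIS) for $r>1/\sqrt{2}$ with packet lengths $1$ and $2$, and then translates that system edge-by-edge into an AJQ system: each link becomes a server, each packet becomes a chain job whose tasks all have processing time equal to the packet's length, activation delays are~$0$, and \textit{LCT-LIS} on tasks mimics \textit{LPL-LIS} on packets. Instability transfers immediately. If you want a self-contained argument, you would need to reproduce the multi-server gadget from that CAQT result rather than the two-server symmetric scheme you propose.
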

\begin{proof}
The proof is inspired by the instability by difference in packet length 
proof in the \emph{continuous} AQT~\cite{BlesaCFLMSST09} (CAQT) model. 
Let $(G,\textit{LPL-LIS},\cA')$ be the system used in Theorem~26 
in~\cite{BlesaCFLMSST09} (\textit{LPL-LIS}  denotes the scheduling policy that  
gives priority to the packets with longest length, breaking ties 
according to the longest in system policy). Note that 
$(G,\textit{LPL-LIS},\cA')$ can be seen as an AQT system, except that two
different packet lengths ($1$ and $2$) are taken into account. 
Let us now consider a system $(S,\textit{LCT-LIS},\cA)$ in AJQ, such 
that:
\begin{itemize}
    \item 
    The scheduling policy is \textit{LCT-LIS}.
    \item
    For each packet $p$ injected by $\cA'$, the adversary $\cA$ injects a
    job such that all its tasks have a processing time ($1$ and $2$)
    equal to the length of the injected packet. 
    \item
    The rest of the system is modeled in the same fashion as in the proof of 
    Theorem~\ref{the:unstable}.
\end{itemize}
%
Theorem 26 in~\cite{BlesaCFLMSST09} shows that $(G,\textit{LPL-LIS},\cA')$ is unstable for an injection rate 
 $r>1/\sqrt{2}$. Therefore, it is not hard to derive that $(S,\textit{LCT-LIS},\cA)$ is 
 also unstable for the same rate.  
\end{proof}

Figure~\ref{fig:LET-LIS} illustrates the system $S$ used in the proof of the previous proposition and provides some details about its unstable behavior.


\begin{figure}[t]
  \centering
    \includegraphics[width=1\textwidth]{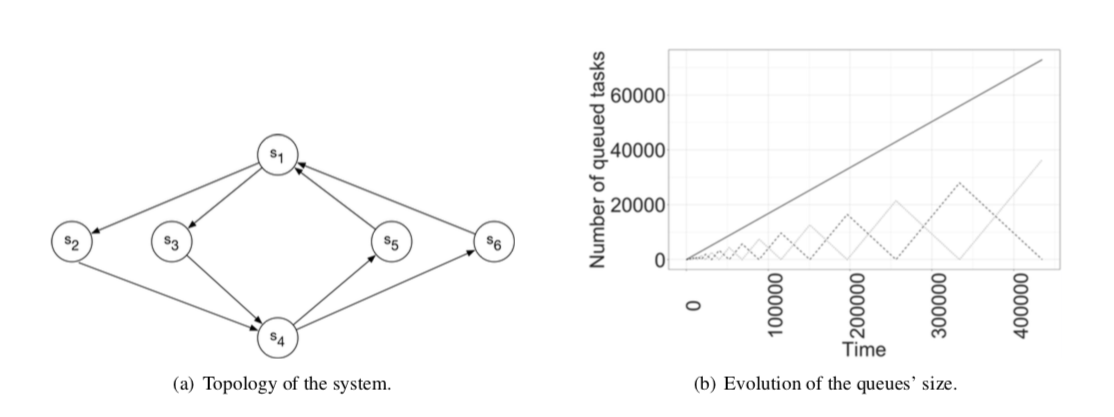}
   \caption{System used in the proof of Proposition~\ref{propos:LET-LIS}. Subfigure~(a) shows the system's topology and subfigure~(b) shows the number of queued packets at each time instant for an injection rate of $0.8$ (which is slightly higher than $1/\sqrt{2}$): dashed lines correspond to servers $s_1$ and $s_4$, and the solid line corresponds to the overall system. As it can be seen, the number of queued \cyan{tasks} at $s_1$ and $s_4$ \emph{oscillate} in an alternating and increasing fashion, which provokes a continuous increase in the system's number of queued \cyan{tasks}.}
  \label{fig:LET-LIS}
\end{figure}


Observe that if all tasks have the same processing time $T$, then the \textit{LCT-LIS} scheduling policy becomes \textit{LIS}.
As shown in Theorem~\ref{thm:lis}, \textit{LIS} is stable for any $r< T_{min}/(T_{max}+D_{max})=T/(T+D_{max})$. Hence, for small $D_{max}$ (e.g., $D_{max}=0$), we have a rate $r>1/\sqrt{2}$ for which \textit{LCT-LIS} is stable if all tasks have the same processing time. Therefore, we have shown that an unstable system can be transformed into stable by only varying the processing times of some of their tasks.

\subsection{Tasks' Activation Delays}

As it has been done in the previous subsection, here we show that the activation delays of the tasks can affect the stability of the system.
Let \textit{SAD-NFS} be the scheduling policy that gives priority to the task with smallest activation delay at the queue of the current server, breaking ties according to the nearest from source policy regarding to an initial task in the job's skeleton.

\begin{proposition}
There exists a server set $S$ and an adversary $\cA$ with injection rate  $r>1/\sqrt{2}$ such that the system $(S,\textit{SAD-NFS},\cA)$ is unstable.
\end{proposition}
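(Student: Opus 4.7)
The plan is to reproduce, mutatis mutandis, the reduction used in the proof of Proposition~\ref{propos:LET-LIS}, but now exploiting the activation-delay parameter of tasks instead of their processing-time parameter. The source of instability will again be a CAQT-style example with two packet classes distinguished by a numerical weight, in which a priority policy on that weight is unstable above rate $1/\sqrt{2}$; the translation into AJQ turns the weight into an activation delay, so that \textit{SAD} in AJQ plays the role that \textit{LPL} played before, and \textit{NFS} is the tiebreaker in both models.

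Concretely, I would take a continuous-AQT system $(G,P',\cA')$ in the spirit of Theorem~26 in~\cite{BlesaCFLMSST09}, but with \textit{NFS} as the tiebreaker (either by citing a variant of that result or by rerunning the same adversarial construction with the \textit{NFS} rule, since the alternating two-server oscillation of Figure~\ref{fig:LET-LIS} does not rely on \textit{LIS} in an essential way). For each packet $p$ injected by $\cA'$ the adversary $\cA$ injects a job whose skeleton mirrors the path of $p$: one task per link, each task depending only on the previous one (AND dependencies, as in the reduction of Theorem~\ref{the:unstable}). I would fix all processing times to a common value and then assign activation delays according to the packet class: the ``priority'' class receives activation delay $0$ and the other class receives a strictly positive delay $\delta$. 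Under this correspondence, \textit{SAD} reproduces the primary rule of $P'$ and \textit{NFS} coincides in both models, so any adversarial schedule that destabilises $(G,P',\cA')$ destabilises $(S,\textit{SAD-NFS},\cA)$ at the same rate $r>1/\sqrt{2}$.

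The hard part will be calibrating $\delta$ against the common processing time $t$ so that the dynamics of the reduction actually match those of the source example. Unlike the processing-time reduction of Proposition~\ref{propos:LET-LIS}, where a packet of length $\ell$ ties up the server for exactly $\ell$ units, an activation delay instead shifts the appearance of a task in the active queue; one therefore has to choose $\delta$ and $t$ so that, along the specific alternating schedule used in the CAQT lower bound, every ``delayed'' task is indeed beaten at its current server by each ``non-delayed'' task that is injected during the decisive window, thereby recovering the oscillating-backlog pattern illustrated in Figure~\ref{fig:LET-LIS}. A secondary technicality, as noted above, is making sure the chosen CAQT instance uses \textit{NFS} rather than \textit{LIS} as its tiebreaker; if the verbatim example of Theorem~26 in~\cite{BlesaCFLMSST09} does not, I would modify the underlying construction so that the tiebreaking decisions it makes are precisely those prescribed by \textit{NFS}, which is enough to close the proof.
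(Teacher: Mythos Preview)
Your high-level strategy---reduce from a CAQT instability result by mapping the distinguishing packet parameter onto the activation delay---is the right one, but you picked the wrong source theorem, and this creates the two ``hard parts'' you flag. The paper instead reduces from Theorem~28 in~\cite{BlesaCFLMSST09}, which concerns the policy \textit{SPP-NFS} (priority to the packet whose \emph{previously traversed link} had smallest propagation delay, ties broken by \textit{NFS}). Propagation delay in CAQT is exactly the time a packet waits after finishing one link before it is available at the next queue, so it corresponds \emph{directly} to the activation delay in AJQ; one sets the activation delay of each task equal to the propagation delay of the link it models. Under that mapping \textit{SPP} becomes \textit{SAD} verbatim, the tiebreaker is already \textit{NFS}, and the dynamics of the two systems coincide step for step. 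No calibration and no tiebreaker surgery are needed.

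By contrast, your proposed mapping ``packet length $\mapsto$ activation delay'' does not preserve the dynamics: in the \textit{LPL} example the long packets destabilise the system precisely because they occupy the server longer, whereas in your AJQ image all processing times are equal and the activation delay only shifts when a task enters the active queue. So even if \textit{SAD} reproduces the priority \emph{ordering} of \textit{LPL}, the evolution of the queues is different, and the oscillation argument of Theorem~26 does not transfer automatically; you would essentially have to build a new instability proof from scratch. Likewise, swapping the tiebreaker from \textit{LIS} to \textit{NFS} is not free and would require re-running the construction. Both difficulties disappear once you reduce from the \textit{SPP-NFS} result instead.
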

\begin{proof}
The proof follows the lines of the one in Proposition~\ref{propos:LET-LIS}.
Let $(G,\textit{SPP-NFS},\cA')$ be the system used in Theorem~28 in~\cite{BlesaCFLMSST09} (\textit{SPP-NFS}  denotes the scheduling policy that  gives priority to the packets whose previously traversed link had smallest propagation delay, breaking ties according to the nearest-from-source policy). Note that in this system the transmission time of every packet is the same in every link. Hence, $(G,\textit{SPP-NFS},\cA')$ can be seen as an AQT system, except that some links have a positive fixed propagation delay. 

Let us now consider a system $(S,\textit{SAD-NFS},\cA)$ in AJQ, such that:
\begin{itemize}
    \item 
    The scheduling policy is \textit{SAD-NFS}.
    \item
    For each link $l$ in $G$ with a propagation delay $d_l$, all tasks executed in its equivalent server will have an activation delay equal to $d_l$. That is, the activation delays are seen as the delays taken by packets to traverse the links (besides the times spend at the queues).
    \item
    The rest of the system is modeled in the same fashion as in Theorem~\ref{the:unstable}.
\end{itemize}

Clearly, if the system $(G,\textit{SPP-NFS},\cA')$ is unstable for a given injection rate then $(S,\textit{SAD-NFS},\cA)$ will be also unstable for the same injection rate. However, by using the result in~\cite{BlesaCFLMSST09} (Theorem 28), we have that $(G,\textit{SPP-NFS},\cA')$ is unstable for an injection rate $r>1/\sqrt{2}$. Therefore, we have that $(S,\textit{SAD-NFS},\cA)$ is also unstable for the same rate. 
\end{proof}

Note that if all links in the system $(G,\textit{SPP-NFS},\cA')$ of the previous proof have zero delay it becomes an AQT system, 
and the $(S,\textit{SAD-NFS},\cA)$ system obtained has only tasks with activation delay of $0$. 
In that case, both \textit{SPP-NFS} and \textit{SAD-NFS} behave as \textit{NFS} in their respective systems.
Moreover, since \textit{NFS} is universally stable in AQT as shown in~\cite{AndrewsAFLLK01}, both systems $(G,\textit{SPP-NFS},\cA')$ and $(S,\textit{SAD-NFS},\cA)$ are stable.
Hence, we have shown that an unstable system can be transformed into stable by only varying the activation delays of some of their tasks.

\subsection{Feasibility Function Among Tasks}

Now, we show that the feasibility function is a factor that, by itself, can also induce instability.
We say that a feasibility function is \emph{fully independent} if no task in any job depends on any other task (i.~e.,
all tasks are initial). In this case, we also say that the tasks are fully independent.

\begin{proposition}
Let $(S,P,\cA)$ be an AJQ system such that all the tasks are fully independent. 
Then, for any set of servers $S$, any policy $P$ and any 
$(r,b)$-adversary $\cA$ with injection rate 
$r \leq 1$, the system $(S,P,\cA)$ is stable.
\end{proposition}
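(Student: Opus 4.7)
The plan is to reduce this statement directly to the feed-forward stability theorem proved in Section~\ref{l-topology}. The key observation is that, if all tasks are fully independent, then no task $k_i$ of any job $K$ depends on any other task. By the definition of \emph{skeleton}, this means that for every injected job $K$ the edge set $E(H_K)$ is empty. Consequently, when we map each skeleton into the set of servers to build the topology of the system, no arcs are produced, and the topology of $(S,P,\cA)$ is the empty directed graph on the vertex set $S$.

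Any empty directed graph is trivially feed-forward: fix any ordering $s_1,s_2,\ldots,s_n$ of the servers; the condition that every arc goes from a lower-indexed to a higher-indexed server is vacuously satisfied since there are no arcs. Hence the hypotheses of the feed-forward stability theorem apply to $(S,P,\cA)$, and we conclude that the system is stable for any scheduling policy $P$ and any $(r,b)$-adversary $\cA$ with $r \leq 1$.

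If a self-contained argument is preferred instead of this reduction, one can replay only the ``base case'' $j=1$ of the feed-forward proof for each server independently. Indeed, since there are no dependencies, every task becomes feasible as soon as its job is injected, and therefore each server $s$ behaves in isolation like server $1$ in that proof: in every interval $[t^*,T]$ starting from an empty-queue instant $t^*$, the drain time satisfies $\tau_s(T) \le \tau_s(t^*) + b \le D_{\max} + 2b$, yielding an absolute bound $Q_s(t) \le (D_{\max}+2b)/T_{\min}$ on the number of queued tasks. No step of this analysis requires any interaction between servers, so there is no real obstacle; the only thing to be careful about is to note that full independence removes all cross-server influence, which is precisely what allows the single-server bound to hold uniformly in time.
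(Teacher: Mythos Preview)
Your proposal is correct. The paper's own proof is the one-line direct argument (``Direct, from the injection bound of Equation~(\ref{advbound}) and the fact that $P$ is work conserving''), which is essentially your alternative self-contained argument stated even more tersely: with no dependencies every task is feasible at injection and active $d_i^K$ time later, so each server is an isolated work-conserving queue fed by an $(r,b)$-bounded arrival process and therefore has bounded backlog.

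Your primary route---observing that full independence makes every skeleton edgeless, hence the system topology has no arcs and is vacuously feed-forward, and then invoking the theorem of Section~\ref{l-topology}---is a genuinely different argument. It is a clean reduction that turns the proposition into an immediate corollary of the heavier potential-function result, whereas the paper's proof is elementary and self-contained and does not rely on that machinery. Both are valid; the reduction is arguably more elegant, while the direct argument better conveys that the claim is truly trivial once dependencies disappear. One small caveat in your self-contained sketch: you should also cover the case where the server has no empty-queue instant in $[0,T]$, which gives $\tau_s(T)\le \tau_s(0)+b$; this is handled in the feed-forward base case you are replaying but is not explicitly stated in your paragraph.
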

\begin{proof}
Direct, from the injection bound of Equation~(\ref{advbound}) and the fact that $P$ is work conserving. 
\end{proof}

Then, it is clear that if we take an unstable system and make all 
tasks fully independent, it will become stable.

\section{Future Work}
The AJQ model opens interesting research questions. Regarding scheduling
policies, it is still unknown whether there exists a universally stable 
policy \cyan{(i.~e., a policy stable under any adversary with $r<1$)}. Indeed, all the parameters of the model make difficult to 
see the existence of a universally stable policy. Regarding systems' 
topology, a full characterization of the topologies that produce a stable 
system against any bounded adversary is still open. \cyan{For instance, while
we argue in Section~\ref{l-topology} that the universal stability of the ring
in AQT can be propagated to AJQ, it is only for jobs that mimic the dependencies and topology of AQT.
It would be interesting to know whether all AJQ systems with a ring topology are
stable under bounded adversaries.}

On another hand, the AJQ model can be extended transferring the resource
allocation decision from the adversary to the scheduling policy. 
In that case, the adversary could provide, for each task, a 
set of servers in which it can be processed
(instead of a single server, as it is done
in our model). In that extended model, we would be able to study the
impact of resource allocation into the stability of a system.

\bibliographystyle{plain}
\bibliography{radio-route,tesis}

\end{document}